\else \usepackage{latexsym}\fi
\newcommand\black{\ensuremath{\blacktriangleright}}
\newcommand\white{\ensuremath{\vartriangleright}}
\newif\ifamsfontsloaded
  \newcommand\whbl{\white\kern-.1em--\kern-.1em\black}
  \newcommand\blwh{\black\kern-.1em--\kern-.1em\white}
  \newcommand\blbl{\black\kern-.1em--\kern-.1em\black}
  \newcommand\whwh{\white\kern-.1em--\kern-.1em\white}
       \newcommand\whbl{\white\kern-.125em--\kern-.125em\black}%
       \newcommand\blwh{\black\kern-.125em--\kern-.125em\white}%
       \newcommand\blbl{\black\kern-.125em--\kern-.125em\black}%
       \newcommand\whwh{\white\kern-.125em--\kern-.125em\white}}
\newtheorem{theorem}{Theorem}[section]
\newtheorem{definition}[theorem]{Definition}
\title[Draft of Algebra of Concurrent Games]
      {Algebra of Concurrent Games}
\author[Yong Wang]
    {Yong Wang\\
     College of Computer Science and Technology,\\
     Faculty of Information Technology,\\
     Beijing University of Technology, Beijing, China\\
     }
\begin{document}
\label{firstpage}

\makecorrespond

\maketitle

\begin{abstract}
We introduce parallelism into the basic algebra of games to model concurrent game algebraically. Parallelism is treated as a new kind of game operation. The resulted algebra of concurrent games can be used widely to reason the parallel systems.
\end{abstract}

\begin{keywords}
Games; Algebra; Concurrent Games; Formal Theory.
\end{keywords}

\section{Introduction}{\label{int}}

The combination of games and logic (especially computational logic) always exists two ways \cite{LIG}: one is the usual usage of game theory to interpret logic, such as the well-known game semantics; the other is to use logic to understand game theory, such as game logic \cite{GL1} \cite{GL2} \cite{CGL}. The basic algebra of games \cite{BAG1} \cite{BAG2} is also a way to use logic to interpret games algebraically.

Since concurrency is a fundamental activity pattern in nature, it exists not only in computational systems, but also in any process in nature. Concurrent games \cite{CG1} \cite{CG2} \cite{CG3} make the game theory to capture concurrency. But the logic of concurrent games is still missing.

In this paper, we extend the basic algebra of games \cite{BAG1} \cite{BAG2} to discuss the logic of concurrent games algebraically. This paper is organized as follows. In Section \ref{bag}, we repeat the main concepts and conclusions of the basic algebra of games for the convenience of the reader. We give the algebra of concurrent games in Section \ref{acg}. And finally, in Section \ref{con}, we conclude this paper. 

\section{Basic Algebra of Games}\label{bag}

For the convenience of the readers, in this section, we repeat the definitions and main results of Basic Algebra of Games (abbreviated BAG) in \cite{BAG1} and \cite{BAG2}, for they are used in the following sections.

\begin{definition}[Game language]\label{gl}
The game language $GL$ consists of:
\begin{enumerate}
  \item a set of atomic games $\mathcal{G}_{at}=\{g_a\}_{a\in A}$, and a special \emph{idle} atomic game $\iota=g_0\in\mathcal{G}_{at}$;
  \item game operations choice of first player $\vee$, choice of second player $\wedge$, dualization $^d$ and composition of games $\circ$.
\end{enumerate}

Atomic games and their duals are called literals.
\end{definition}

\begin{definition}[Game terms]\label{gt}
The game terms are defined inductively as follows:
\begin{itemize}
  \item every atomic game $g_a$ is a game term;
  \item if $G,H$ are game terms, then $G^d, H^d$, $G\vee H$, $G\wedge H$ and $G\circ H$ are all game terms.
\end{itemize}
\end{definition}

\begin{definition}[Outcome conditions]
$GL=\langle S, \{\rho^i_a\}_{a\in A;i=1,2}\rangle$ are called game boards, where $S$ is the set of states and $\rho^i_a\subseteq S\times P(S)$ are outcome relations, which satisfy the following two forcing conditions:

\begin{enumerate}
  \item monotonicity (MON): for any $s\in S$, and $X\subseteq Y\subseteq S$, if $s\rho^i_a X$, then $s\rho^i_a Y$;
  \item consistency (CON): for any $s\in S, X\subseteq S$, if $s\rho^1_a X$, then not $s\rho^2_a (S-X)$.
\end{enumerate}

And the following optional conditions:

\begin{enumerate}
  \item termination (FIN): for any $s\in S$, then $s\rho^i_a S$, and the class of terminating game boards are denoted $\textbf{FIN}$;
  \item determinacy (DET): $s\rho^2_a (S-X)$ iff $s\rho^1_a X$, and the class of determined game boards are denoted $\textbf{DET}$.
\end{enumerate}

The outcome relation $\rho^i_G$ for any game term $G$ can be defined inductively according to the structure of $G$.
\end{definition}

\begin{definition}[Included]
For game terms $G_1$ and $G_2$ on game board $B$, if $\rho^i_{G_1}\subseteq \rho^i_{G_2}$, then $G_1$ is i-included in $G_2$ on $B$, denoted $G_1\subseteq_i g_2$; if $G_1\subseteq_1 G_2$ and $G_1\subseteq_2 G_2$, then $G_1$ is included in $G_2$ on $B$, denoted $B\models G_1\preceq G_2$; if $B\models G_1\preceq G_2$ for any $B$, then $G_1\preceq G_2$ is called a valid term inclusion, denoted $\models G_1\preceq G_2$.

If $G_1$ and $G_2$ are assigned the same outcome relation in $B$, then they are equivalent on $B$, denoted $B\models G_1=G_2$; if $B\models G_1=G_2$ for any game board $B$, then $G_1=G_2$ is a valid term identity, denoted $\models G_1=G_2$.
\end{definition}

The axioms of BAG are shown in Table \ref{AxiomsOfBAG}.

\begin{center}
    \begin{table}
        \begin{tabular}{@{}ll@{}}
            \hline No. &Axiom\\
            $G1$ & $x\vee x = x\quad x\wedge x =x$\\
            $G2$ & $x\vee y = y\vee x\quad x\wedge y = y\wedge x$\\
            $G3$ & $x\vee (y\vee z)=(x\vee y)\vee z\quad x\wedge (y\wedge z)=(x\wedge y)\wedge z$\\
            $G4$ & $x\vee(x\wedge y)=x\quad x\wedge (x\vee y)=x$\\
            $G5$ & $x\vee(y\wedge z)=(x\vee y)\wedge (x\vee z)\quad x\wedge (y\vee z)=(x\wedge y)\vee (x\wedge z)$\\
            $G6$ & $(x^d)^d = x$\\
            $G7$ & $(x\vee y)^d= x^d\wedge y^d\quad (x\wedge y)^d = x^d\vee y^d$\\
            $G8$ & $(x\circ y)\circ z = x\circ(y\circ z)$\\
            $G9$ & $(x\vee y)\circ z = (x\circ z)\vee (y\circ z)\quad (x\wedge y)\circ z = (x\circ z)\wedge (y\circ z)$\\
            $G10$ & $x^d\circ y^d=(x\circ y)^d$\\
            $G11$ & $y\preceq z \rightarrow x\circ y\preceq x\circ z$\\
            $G12$ & $x\circ \iota = \iota\circ x = x$\\
            $G13$ & $\iota^d=\iota$\\
        \end{tabular}
        \caption{Axioms of BAG}
        \label{AxiomsOfBAG}
    \end{table}
\end{center}

\begin{definition}[Canonical terms]
The canonical terms can be defined as follows:
\begin{enumerate}
  \item $\iota$ and $g_a$ are canonical terms;
  \item $\bigvee_{i\in I}\bigwedge_{k\in K_i}g_{ik}\circ G_{ik}$ is a canonical term, where $g_{ik}$ is a literal and $G_{ik}$ is a game term.
\end{enumerate}
\end{definition}

\begin{theorem}[Elimination theorem 1]\label{et1}
Every game term $G$ is equivalent to a canonical game term in $\textbf{BAG}^\iota$.
\end{theorem}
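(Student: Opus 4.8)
The plan is to argue by structural induction on the game term $G$, showing at each step that the axioms of Table~\ref{AxiomsOfBAG} suffice to rewrite $G$ into a canonical term. Throughout I use that $\vee$, $\wedge$, $\circ$ and $^d$ are interpreted compositionally, so provable equivalence is a congruence and I may always replace a subterm by an equivalent one; I also silently flatten nested choices by the semilattice laws $G1$--$G3$ and keep all index sets finite. For the base cases, if $G$ is $\iota$ or an atomic game $g_a$ it is canonical by definition; and if the genuine $\bigvee\bigwedge$-shape is wanted, $G12$ rewrites $g_a$ as $g_a\circ\iota$ and $\iota$ as $\iota\circ\iota$, with $\iota=g_0$ a literal by $G13$.

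For the inductive step, assume $G$ and $H$ are equivalent to canonical terms $\bar G=\bigvee_{i\in I}\bigwedge_{k\in K_i} g_{ik}\circ G_{ik}$ and $\bar H=\bigvee_{j\in J}\bigwedge_{l\in L_j} h_{jl}\circ H_{jl}$, the degenerate shapes $\iota,g_a$ normalised as above. \emph{Composition $G\circ H$:} substitute $\bar G$, push the outer $\circ H$ inwards through the two choices by the distributivity of $\circ$ in its first argument ($G9$), then reassociate each leaf by $G8$, obtaining $\bigvee_i\bigwedge_k g_{ik}\circ(G_{ik}\circ H)$, which is canonical since $g_{ik}$ is still a literal and $G_{ik}\circ H$ is a game term. \emph{First-player choice $G\vee H$:} after flattening $\bar G\vee\bar H$ by $G2$ and $G3$ one has a single disjunction of the conjunctive blocks of $\bar G$ and $\bar H$, which is already canonical (use $G1$ to tidy up). \emph{Second-player choice $G\wedge H$:} the distributivity law $G5$ turns the meet of two disjunctions into $\bigvee_{(i,j)\in I\times J}\bigl((\bigwedge_k g_{ik}\circ G_{ik})\wedge(\bigwedge_l h_{jl}\circ H_{jl})\bigr)$, and flattening the inner meets by $G3$ yields the canonical shape. \emph{Dualization $G^d$:} substitute $\bar G$ and drive $^d$ inwards: $G7$ gives $\bigl(\bigvee_i\bigwedge_k g_{ik}\circ G_{ik}\bigr)^d=\bigwedge_i\bigvee_k(g_{ik}\circ G_{ik})^d$, $G10$ gives $(g_{ik}\circ G_{ik})^d=g_{ik}^d\circ G_{ik}^d$, and $g_{ik}^d$ is again a literal by $G6$ (and by $G13$ when $g_{ik}=\iota$); what remains is a meet of joins, converted into the required join of meets exactly as in the $\wedge$ case.

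The composition and choice cases are essentially bookkeeping with associativity, commutativity, idempotence and absorption, so the step I expect to be the main obstacle is the dualization case: once the duals sit on the literals one is holding, in effect, a conjunctive normal form, whereas a canonical term is a disjunctive normal form, so one must carry out a CNF-to-DNF conversion by iterated use of the distributive law $G5$. The points to check there are that this conversion is genuinely available in $\textbf{BAG}^{\iota}$ --- it is, since $G1$--$G5$ make $(\vee,\wedge)$ a distributive lattice --- that the leaf subterms $g_{ik}^d\circ G_{ik}^d$ are carried through unchanged by the expansion, and (since everything stays finite) that the result is still a bona fide finite $\bigvee\bigwedge$ term; absorption $G4$ and idempotence $G1$ can be applied along the way to avoid needless blow-up.
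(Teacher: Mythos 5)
Your proposal is correct and follows exactly the route the paper (via \cite{BAG1}) indicates for Theorem~\ref{et1}: structural induction on game terms, pushing $\circ$ and $^d$ inward with $G6$--$G10$ and normalizing with the distributive-lattice laws $G1$--$G5$ and the $\iota$-axioms $G12$--$G13$. The paper itself only defers to the cited sources, so your fleshed-out induction (including the correct identification of the CNF-to-DNF step in the dualization case) adds detail but no new idea.
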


\begin{definition}[Isomorphic]
Two canonical terms $G$ and $H$, if one can be obtained from the other by means of successive permutations of conjuncts (and disjuncts) within the same $\bigwedge$'s (and $\bigvee$'s) in subterms.
\end{definition}

\begin{definition}[Embedding]
The embedding of canonical terms is defined inductively as follows:

\begin{itemize}
  \item $\iota\rightarrowtail\iota$;
  \item if $g,h$ are literals and $G,H$ are canonical terms, $g\circ G$ embeds into $h\circ H$ iff $g=h$ and $G\rightarrowtail H$;
  \item $\bigwedge_{k\in K}g_k\circ G_k$ embeds into $\bigwedge_{m\in M} h_m\circ H_m$, if for every $m\in M$ there is some $k\in K$ such that $g_k\circ G_k\rightarrowtail h_m\circ H_m$;
  \item for $G=\bigvee_{i\in I}\bigwedge_{k\in K_i}g_{ik}\circ g_{ik}$ and $H=\bigvee_{j\in J}\bigwedge_{m\in M_j}h_{jm}\circ H_{jm}$, $G\rightarrowtail H$ iff every disjunct of $G$ embeds into some disjunct of $H$.
\end{itemize}
\end{definition}

\begin{definition}[Minimal canonical terms]
The minimal canonical terms are defined inductively as follows:
\begin{itemize}
  \item $\iota$ is a minimal canonical term;
  \item for a canonical term $G=\bigvee_{i\in I}\bigwedge_{k\in K_i} g_{ik}\circ G_{ik}$ with all $G_{ik}$ are minimal canonical terms, then $G$ is minimal canonical term if:
      \begin{enumerate}
        \item $\iota^d$ does not occur in $G$;
        \item None of $g_{ik}$ is $\iota$ unless $G_{ik}$ is $\iota$;
        \item No conjunct occurring in $\bigwedge_{k\in K} g_{ik}\circ G_{ik}$ embeds into another conjunct from the same conjunction;
        \item No disjunct in $G$ embeds into another disjunct of $G$.
      \end{enumerate}
\end{itemize}
\end{definition}

\begin{theorem}[Elimination theorem 2]\label{et2}
Every game term $G$ is equivalent to a minimal canonical game term in $\textbf{BAG}^\iota$.
\end{theorem}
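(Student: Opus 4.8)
The plan is to reduce the statement to Elimination Theorem~\ref{et1} followed by a finite ``clean\nobreakdash-up'' procedure. By Theorem~\ref{et1} the term $G$ is equivalent in $\textbf{BAG}^\iota$ to a canonical term, and, applying that theorem again to the right\nobreakdash-hand factors $G_{ik}$ and iterating, we may assume in addition that every $G_{ik}$ occurring inside it is itself canonical. It then suffices to prove the sharper claim: \emph{every canonical term is equivalent to a minimal canonical term}. I would prove this by induction on a well\nobreakdash-founded measure of canonical terms (for definiteness, the multiset of $\circ$\nobreakdash-nesting depths of the subterms, ordered by the Dershowitz--Manna multiset ordering), so that when a canonical term $G=\bigvee_{i\in I}\bigwedge_{k\in K_i} g_{ik}\circ G_{ik}$ is treated, each $G_{ik}$ may be assumed already replaced by an equivalent \emph{minimal} canonical term. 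The four defining clauses of minimality, together with the side condition forbidding $\iota^d$, then become five kinds of ``defect'' to be removed one after another by provably valid identities.

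Concretely I would eliminate the defects in the following order. \textbf{(i)} Every occurrence of $\iota^d$ is rewritten to $\iota$ using $G13$. \textbf{(ii)} Any conjunct $\iota\circ G_{ik}$ with $G_{ik}\neq\iota$ is collapsed to $G_{ik}$ by $G12$; since $G_{ik}$ is by hypothesis minimal canonical and distinct from $\iota$, it has the shape $\bigvee_p\bigwedge_q g_{pq}\circ G_{pq}$, so canonical form is restored by pushing this inner disjunction to the top with the distributive law $G5$ and re\nobreakdash-ordering, the new conjuncts $g_{pq}\circ G_{pq}$ again having non\nobreakdash-$\iota$ heads and minimal\nobreakdash-canonical tails, so that no defect of kind (i) or (ii) is re\nobreakdash-introduced. \textbf{(iii)} If, inside some conjunction $\bigwedge_{k\in K_i}g_{ik}\circ G_{ik}$, one conjunct embeds into another, the ``larger'' conjunct is absorbed and deleted. \textbf{(iv)} If one disjunct of $G$ embeds into another, the ``smaller'' disjunct is absorbed and deleted. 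Steps (iii) and (iv) rest on two facts: first, that $G1$--$G5$ exhibit $\vee,\wedge$ as the join and meet of a distributive lattice whose order is exactly $\preceq$, so that $\models a\preceq b$ gives $\models a\wedge b=a$ and $\models a\vee b=b$; and second, on the key \textbf{Embedding Lemma}: if $C\rightarrowtail C'$ then $\models C\preceq C'$. This lemma would be proved by induction on the definition of $\rightarrowtail$, using that $\preceq$ is a precongruence for $\circ$ (axioms $G9$, $G11$) and monotone in each argument of $\vee$ and $\wedge$.

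For termination one observes that deleting a conjunct can never create a new embedding between the surviving conjuncts, since $g_k\circ G_k\rightarrowtail g_m\circ G_m$ depends only on that pair, and likewise for disjuncts; moreover step (ii) uses only $G12$ and $G5$, which manufacture neither new $\iota^d$ nor new $\iota$\nobreakdash-heads. Hence performing (i), then (ii), then (iii), then (iv), each to completion, strictly decreases the relevant counter at each move and leaves a canonical term all of whose pieces witness the four minimality clauses; combined with the outer induction this gives a minimal canonical term equivalent to $G$.

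I expect the main obstacle to be two\nobreakdash-fold. First, pinning down the Embedding Lemma with the \emph{correct} direction of $\preceq$, and checking that this direction is the one matched by absorption of the larger conjunct versus the smaller disjunct (it is very easy to flip an inequality here). Second, setting up the well\nobreakdash-founded induction so that it is compatible with the preliminary use of Theorem~\ref{et1}: one must verify that pre\nobreakdash-canonicalizing the factors $G_{ik}$ does not increase the chosen measure, and that the $G5$\nobreakdash-redistribution in step (ii), which can greatly enlarge the term, nevertheless decreases that measure — which is precisely why a multiset of $\circ$\nobreakdash-depths, rather than plain term size, is the right choice. A minor but necessary point is to treat the degenerate cases ($G=\iota$, $G=g_a$, empty index sets, and conjuncts of the form $\iota\circ\iota$) explicitly, reading them off the definitions so that the output genuinely satisfies every clause of minimality.
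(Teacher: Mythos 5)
The paper itself gives no proof of Theorem~\ref{et2}: it is quoted as a known result and deferred to \cite{BAG1} and \cite{BAG2}. Measured against the argument in those sources, your reconstruction — canonicalize via Theorem~\ref{et1}, then remove the $\iota^d$/$\iota$-head defects by $G12$, $G13$ and $G5$, and delete redundant conjuncts and disjuncts via absorption backed by the lemma that embedding implies $\preceq$ — is essentially the same proof, and your directions for which conjunct/disjunct to delete are the correct ones; the only soft spot is the claim that the multiset of $\circ$-depths strictly decreases under the $G5$-redistribution (duplicating the untouched conjuncts adds equal-depth elements), but the auxiliary counters you already name (number of defective $\iota$-heads, number of conjuncts/disjuncts) suffice for termination.
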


\begin{theorem}[Completeness theorem]\label{ct}
The minimal canonical terms $G$ and $H$ are equivalent iff they are isomorphic.
\end{theorem}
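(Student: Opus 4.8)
The proof of the completeness theorem splits naturally into two directions.

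\emph{Soundness (isomorphic $\Rightarrow$ equivalent).} This is the easy direction. If two canonical terms are isomorphic, then one is obtained from the other by permuting conjuncts inside $\bigwedge$'s and disjuncts inside $\bigvee$'s, possibly deep inside subterms. Since commutativity ($G2$) and associativity ($G3$) are axioms of \textbf{BAG}, and since provable identities are valid identities (by the soundness of the axiom system for the outcome-relation semantics, which is routine to check axiom by axiom), each such permutation preserves the assigned outcome relation on every game board. A straightforward induction on term structure, using $G2$ and $G3$ at each level, shows that isomorphic minimal canonical terms are equivalent.

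\emph{Completeness (equivalent $\Rightarrow$ isomorphic).} This is the substantive direction. The strategy is to show the contrapositive via the embedding relation: I would first prove that for minimal canonical terms, $\models G \preceq H$ holds iff $G \rightarrowtail H$. One inclusion — that $G\rightarrowtail H$ implies $\models G\preceq H$ — follows by induction on the embedding definition, checking that each clause corresponds to a valid inclusion (a single disjunct of $G$ embedding into a disjunct of $H$ gives $\preceq$ at the level of that disjunct; more conjuncts on the right only shrink the outcome relation, etc.). The converse — that $\models G\preceq H$ forces $G\rightarrowtail H$ — is proved by contraposition: if $G$ does \emph{not} embed into $H$, I would construct a specific game board $B$ (a ``diagnostic'' or ``counter'' board, built from the syntax tree of $G$) on which some outcome of $G$ is not an outcome of $H$, witnessing $B\not\models G\preceq H$. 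Once the equivalence $\models G\preceq H \Leftrightarrow G\rightarrowtail H$ is in hand, equivalence of $G$ and $H$ means $G\rightarrowtail H$ and $H\rightarrowtail G$; then a combinatorial argument using minimality (no disjunct of $G$ embeds into another disjunct of $G$, no conjunct embeds into another conjunct of the same conjunction, $\iota^d$ absent, $\iota$ only in trivial position) upgrades the two-way embedding to an isomorphism, essentially by an antisymmetry argument on the embedding preorder restricted to minimal terms, applied recursively through the structure.

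\emph{Main obstacle.} The hard part is the construction of the diagnostic game board that separates two non-embedding minimal canonical terms. One must extract from the failure of $G\rightarrowtail H$ a particular disjunct, and within it a particular conjunct, that has no matching partner on the $H$ side, and then design outcome relations $\rho^i_a$ over a carefully chosen state set (tracking ``how far'' a play has progressed through the term tree) so that the first player can force the bad outcome in $G$ but the second player can block it in $H$ — all while respecting (MON) and (CON). Getting this construction to respect the forcing conditions, and to thread correctly through the recursive structure (including the $\circ$-composition clauses and the literal-matching condition $g=h$), is where the real work lies; the upgrade from mutual embedding to isomorphism is then comparatively routine bookkeeping using the minimality clauses.
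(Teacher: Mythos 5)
The paper itself does not prove this theorem: its ``proof'' is a pointer to the two cited sources (Goranko's modal-logic argument and Venema's algebraic representation argument). Your outline is a faithful reconstruction of the strategy underlying those proofs --- soundness of the permutation axioms $G2$--$G3$ for the easy direction, and for the hard direction the key lemma that, for minimal canonical terms, $\models G\preceq H$ iff $G\rightarrowtail H$, followed by an antisymmetry argument on the embedding preorder using the four minimality clauses. So the decomposition is right and the key lemma is correctly identified.

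But as a proof it has a genuine gap, and it is exactly the one you flag yourself: the diagnostic game board that separates $G$ from $H$ when $G\not\rightarrowtail H$ is only promised, never constructed. That construction is the entire mathematical content of the completeness direction, and it is not routine. Two specific difficulties you would have to resolve: (i) $\preceq$ is defined as the conjunction of $\subseteq_1$ and $\subseteq_2$, so refuting $\models G\preceq H$ requires violating one of the two player-indexed outcome relations, and the single embedding relation $\rightarrowtail$ must be shown to track \emph{both} $\rho^1$ and $\rho^2$ simultaneously (this is where dualization and axiom $G7$ enter, and where the clause matching literals $g=h$ rather than just atoms matters); (ii) the counter-board must satisfy (MON) and (CON), which rules out assigning outcome relations freely --- the standard move is to work with boards generated by neighborhood functions closed under supersets and to verify that this restricted class already suffices to refute every non-embedding, which is itself a lemma. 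Until the board construction is written down and checked against the recursive clauses for $\circ$ and for conjunctions (where your parenthetical about ``more conjuncts on the right'' has the monotonicity direction inverted relative to the embedding definition: a conjunction embeds into a target when every conjunct of the \emph{target} is hit by some conjunct of the source), the argument is a plan rather than a proof.
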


\begin{proof}
See the proof in \cite{BAG1} and \cite{BAG2}, the proof in \cite{BAG1} is based on modal logic and that in \cite{BAG2} is in a pure algebraic way.
\end{proof}

\section{Algebra of Concurrent Games}\label{acg}

Concurrent games \cite{CG1} \cite{CG2} \cite{CG3} mean that players can play games simultaneously. For two atomic games $g_a$ and $g_b$ in parallel denoted $g_a\parallel g_b$, the outcomes may be $g_a\circ g_b$, or $g_b\circ g_a$, or they are played simultaneously. Since $g_a$ and $g_b$ may be played simultaneously, $\parallel$ must be treated as fundamental game operation. In this section, we will add $\parallel$ game operation into the basic algebra of games, and the new formed algebra is called Algebra of Concurrent Games, abbreviated ACG.

\begin{definition}[Game language with parallelism]
The new kind of game operation $\parallel$ is added into the game language in Definition \ref{gl}.
\end{definition}

\begin{definition}[Game terms with parallelism]
If $G$ and $H$ are game terms, then $G\parallel H$ is also a game term in Definition \ref{gt}.
\end{definition}

The definitions of Outcome conditions and Included are the same as those in Section \ref{bag}.

For concurrent games, the following axioms in Table \ref{AxiomsOfParallelism} should be added into Table \ref{AxiomsOfBAG}.

\begin{center}
    \begin{table}
        \begin{tabular}{@{}ll@{}}
            \hline No. &Axiom\\
            $CG1$ & $(x\parallel y)\parallel z = x\parallel(y\parallel z)$\\
            $CG2$ & $g_a\parallel (g_b\circ y) = (g_a\parallel g_b)\circ y$\\
            $CG3$ & $(g_a\circ x)\parallel g_b = (g_a\parallel g_b)\circ x$\\
            $CG4$ & $(g_a\circ x)\parallel (g_b\circ y) = (g_a\parallel g_b)\circ (x\parallel y)$\\
            $CG5$ & $(x\vee y)\parallel z = (x\parallel z)\vee (y\parallel z)$\\
            $CG6$ & $x\parallel (y\vee z) = (x\parallel y)\vee (x\parallel z)$\\
            $CG7$ & $(x\wedge y)\parallel z = (x\parallel z)\wedge (y\parallel z)$\\
            $CG8$ & $x\parallel (y\wedge z) = (x\parallel y)\wedge (x\parallel z)$\\
            $CG9$ & $(x\parallel y)^d= x^d\parallel y^d$\\
            $CG10$ & $\iota\parallel x=x$\\
            $CG11$ & $x\parallel\iota = x$\\
        \end{tabular}
        \caption{Axioms of ACG}
        \label{AxiomsOfParallelism}
    \end{table}
\end{center}

\begin{definition}[Canonical terms with parallelism]
The canonical terms can be defined as follows:
\begin{enumerate}
  \item $\iota$ and $g_a$ are canonical terms;
  \item $\bigvee_{i\in I}\bigwedge_{k\in K_i}||_{l\in L}g_{ikl}\circ ||_{l\in L}G_{ikl}$ is a canonical term, where $g_{ikl}$ is a literal and $G_{ikl}$ is a game term.
\end{enumerate}
\end{definition}

\begin{theorem}[Elimination theorem 1 with parallelism]
Every game term $G$ is equivalent to a canonical game term in $\textbf{ACG}^\iota$.
\end{theorem}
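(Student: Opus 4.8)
The plan is to mimic the proof of Elimination theorem~1 (Theorem~\ref{et1}) for $\textbf{BAG}^\iota$, extending the induction to cover the new operation $\parallel$. I would argue by induction on the structure of the game term $G$. The atomic cases $\iota$ and $g_a$ are canonical by definition. For the compound cases, by the inductive hypothesis every proper subterm is equivalent to a canonical term, so it suffices to show that each game operation applied to canonical terms yields something equivalent to a canonical term. For the operations $\vee$, $\wedge$, $^d$, and $\circ$ this is exactly what the proof of Theorem~\ref{et1} establishes, using axioms $G1$--$G13$; moreover one must recheck that these derivations still go through when the literals' continuations may themselves contain $\parallel$, but since those steps only manipulate the $\bigvee$/$\bigwedge$/$\circ$ skeleton and never inspect the internal structure of the continuations, nothing changes. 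So the only genuinely new case is $G = G_1 \parallel G_2$ with $G_1, G_2$ canonical.

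For that case I would first use the distributive axioms $CG5$--$CG8$ to push $\parallel$ inward past $\vee$ and $\wedge$: writing $G_1 = \bigvee_i \bigwedge_k P_{ik}$ and $G_2 = \bigvee_j \bigwedge_m Q_{jm}$ in canonical form, repeated application of $CG5$, $CG6$, $CG7$, $CG8$ rewrites $G_1 \parallel G_2$ as $\bigvee_{i,j} \bigwedge_{k,m} (P_{ik} \parallel Q_{jm})$, reducing the problem to the case where both arguments are single ``parallel-products of literal-headed terms'', i.e.\ terms of the form $\|_{l} g_{l} \circ \|_{l} G_{l}$. Next I would handle $\iota$-headed factors via $CG10$, $CG11$ and, if $\iota$ itself appears as one of the two whole arguments, dispatch it immediately with $CG10$/$CG11$. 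The remaining subcase is $(\|_{l\in L} g_l \circ \|_{l\in L} G_l) \parallel (\|_{l'\in L'} h_{l'} \circ \|_{l'\in L'} H_{l'})$ where all heads are genuine literals; here axiom $CG4$ (together with $CG1$ for associativity, and $CG2$, $CG3$ to cover the degenerate cases where a continuation is $\iota$) collapses the whole expression to $\|_{l\in L\sqcup L'} g_l \circ \|_{l\in L\sqcup L'} G_l$, which is precisely the shape of a conjunct in a canonical term with parallelism. Finally, since the $G_l$, $H_{l'}$ are game terms, the inductive hypothesis lets us assume each is already canonical, so the assembled term matches the definition of canonical term with parallelism, and distributing the outer $\bigvee_{i,j}\bigwedge_{k,m}$ back over these conjuncts gives the required canonical form for $G_1 \parallel G_2$.

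The step I expect to be the main obstacle is the bookkeeping in the last subcase: showing that iterated use of $CG1$--$CG4$ genuinely normalizes an \emph{arbitrary} nesting of $\parallel$ and $\circ$ among literal-headed terms into the single flat form $\|_{l} g_l \circ \|_{l} G_l$, rather than merely one convenient bracketing. One has to be careful that $\parallel$ binds a \emph{set} of literal heads as a unit and that the continuations are themselves placed in parallel as a set; a clean way to organize this is an auxiliary lemma, proved by a separate induction on the number of $\parallel$ and $\circ$ occurrences, stating that any term built from literal-headed canonical terms using only $\parallel$ and $\circ$ is provably equal to one in the flat form. A secondary point worth stating explicitly is the handling of dualization: when $^d$ meets a $\parallel$-expression we use $CG9$ to push it onto the factors, where $G6$, $G7$, $G10$, $G13$ then finish the job as in Theorem~\ref{et1}; this keeps the duals confined to literals and preserves the canonical shape. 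With these pieces in place the induction closes, establishing that every game term is equivalent in $\textbf{ACG}^\iota$ to a canonical game term.
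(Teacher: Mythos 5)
Your proposal is correct and follows essentially the same route as the paper, which simply says to induct on the structure of game terms with $\parallel$ as the only new case; you have filled in the details of that induction (distributing $\parallel$ over $\vee$/$\wedge$ via $CG5$--$CG8$, flattening via $CG1$--$CG4$, handling $\iota$ via $CG10$/$CG11$ and duals via $CG9$) far more explicitly than the paper does. The auxiliary flattening lemma you flag is indeed the only point needing care, and your treatment of it is consistent with what the paper's one-line sketch presupposes.
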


\begin{proof}
Similarly to the proof of Theorem \ref{et1} in \cite{BAG1}, it is sufficient to induct on the structure of game terms, and the new case is $\parallel$.
\end{proof}

\begin{definition}[Isomorphic with parallelism]
Two canonical terms $G$ and $H$, if one can be obtained from the other by means of successive permutations of conjuncts (and disjuncts) within the same $\bigwedge$'s (and $\bigvee$'s) in subterms.
\end{definition}

\begin{definition}[Embedding with parallelism]
The embedding of canonical terms is defined inductively as follows:

\begin{itemize}
  \item $\iota\rightarrowtail\iota$;
  \item if $g_1, g_2,h_1,h_2$ are literals and $G,H$ are canonical terms, $(g_1\parallel g_2)\circ G$ embeds into $(h_1\parallel h_2)\circ H$ iff $g_1=h_1$, $g_2=h_2$, and $G\rightarrowtail H$;
  \item if $g,h$ are literals and $G,H$ are canonical terms, $g\circ G$ embeds into $h\circ H$ iff $g=h$ and $G\rightarrowtail H$;
  \item $\bigwedge_{k\in K}g_k\circ G_k$ embeds into $\bigwedge_{m\in M} h_m\circ H_m$, if for every $m\in M$ there is some $k\in K$ such that $g_k\circ G_k\rightarrowtail h_m\circ H_m$;
  \item for $G=\bigvee_{i\in I}\bigwedge_{k\in K_i}g_{ik}\circ g_{ik}$ and $H=\bigvee_{j\in J}\bigwedge_{m\in M_j}h_{jm}\circ H_{jm}$, $G\rightarrowtail H$ iff every disjunct of $G$ embeds into some disjunct of $H$.
\end{itemize}
\end{definition}

\begin{definition}[Minimal canonical terms with parallelism]
The minimal canonical terms are defined inductively as follows:
\begin{itemize}
  \item $\iota$ is a minimal canonical term;
  \item for a canonical term $G=\bigvee_{i\in I}\bigwedge_{k\in K_i} ||_{l\in L_k}g_{ikl}\circ G_{ikl}$ with all $G_{ikl}$ are minimal canonical terms, then $G$ is minimal canonical term if:
      \begin{enumerate}
        \item $\iota^d$ does not occur in $G$;
        \item None of $g_{ikl}$ is $\iota$ unless $G_{ikl}$ is $\iota$;
        \item No conjunct occurring in $\bigwedge_{k\in K} ||_{l\in L_k}g_{ikl}\circ G_{ikl}$ embeds into another conjunct from the same conjunction;
        \item No disjunct in $G$ embeds into another disjunct of $G$.
      \end{enumerate}
\end{itemize}
\end{definition}

\begin{theorem}[Elimination theorem 2]
Every game term $G$ is equivalent to a minimal canonical game term in $\textbf{ACG}^\iota$.
\end{theorem}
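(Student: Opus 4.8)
The plan is to follow the proof of Theorem~\ref{et2} for $\textbf{BAG}^\iota$, inserting the treatment of the parallel operator at the few places where it is needed. By the Elimination theorem~1 with parallelism I may assume $G$ is already a canonical term, say $\bigvee_{i\in I}\bigwedge_{k\in K_i}(||_{l\in L_k}g_{ikl})\circ(||_{l\in L_k}G_{ikl})$ with every $G_{ikl}$ canonical; the head/tail-factored shape in the definition of canonical terms with parallelism and the factorwise shape $||_{l}(g_{ikl}\circ G_{ikl})$ in the definition of minimal canonical terms are interchangeable by $CG4$. I then run the same normalisation as in $\textbf{BAG}$, applied innermost-first so that each $G_{ikl}$ is brought into minimal canonical form before $G$ is processed: (i) drive every dualisation down to the literals using $G6$, $G7$, $G10$ and, for the parallel connective, $CG9$, and then erase each resulting $\iota^d$ by $G13$; (ii) delete every parallel factor equal to $\iota$ using $CG10$, $CG11$, and collapse a head $\iota$ over a non-$\iota$ tail by $G12$; (iii) whenever one conjunct of a $\bigwedge$ embeds --- in the sense of embedding with parallelism --- into another conjunct of the same conjunction, delete the conjunct it embeds into; (iv) whenever one disjunct of $G$ embeds into another, delete the embedding one.

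Rules (i) and (ii) are immediate from the cited axioms, and rules (iii) and (iv) are justified exactly as in $\textbf{BAG}$ once one knows that the embedding-with-parallelism relation is sound for term inclusion. This soundness lemma is proved by induction on the definition of embedding, and the only clause that is not already in $\textbf{BAG}$ is the parallel-head clause: if $(g_1\parallel g_2)\circ C\rightarrowtail(h_1\parallel h_2)\circ D$, then $g_1=h_1$, $g_2=h_2$ and $C\rightarrowtail D$, hence $C\preceq D$ by the induction hypothesis, hence $(g_1\parallel g_2)\circ C\preceq(g_1\parallel g_2)\circ D$ by $G11$. The point that makes this go through is that $G11$ is stated with an \emph{arbitrary} left factor $x$, so $x:=g_1\parallel g_2$ is admissible even though $g_1\parallel g_2$ is not a literal; with this one clause added, the remaining clauses of the soundness lemma --- literal head via $G11$, and the conjunction and disjunction clauses --- reproduce the $\textbf{BAG}$ argument unchanged. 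One should also observe that rule (ii) can shrink a parallel head $||_{l\in L_k}g_{ikl}$ to a single literal, or, if all its factors were $\iota$, to $\iota$ itself (which is then absorbed by $G12$ unless the whole conjunct is $\iota$); these degenerate heads still fit the canonical shape, so the normalisation does not leave that shape.

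Termination and the identification of irreducible terms with minimal canonical terms I would obtain by reusing the $\textbf{BAG}$ bookkeeping with an enlarged well-founded measure --- lexicographically ordered and taken over the whole term (all nested $G_{ikl}$ included), with components counting, roughly, the $\iota^d$-occurrences, then the $\iota$-factors and head-$\iota$'s, then the conjuncts summed over all conjunctions, then the disjuncts --- which each of (i)--(iv) strictly decreases. An irreducible term then satisfies clauses~1--4 of the definition of minimal canonical term with parallelism: non-applicability of (i) gives clause~1, of (ii) gives clause~2, and of (iii) and (iv) gives clauses~3 and~4, while the innermost-first discipline keeps every $G_{ikl}$ minimal canonical. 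The step I expect to be the main obstacle is not conceptual but is the interaction of rule (ii) with the canonical and minimal shapes in the presence of parallel heads: one must check that every collapse of a parallel head (to a literal, or to $\iota$, or to $\iota\circ\iota$, which $G12$ reduces) is reabsorbed into the canonical form without reopening an already-closed clause, and choose the measure so that these collapses are strictly decreasing --- this is the one verification that has no exact counterpart in the $\textbf{BAG}$ proof.
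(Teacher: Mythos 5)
Your proposal is correct and takes essentially the same route the paper intends: the paper's own proof is only a one-sentence deferral to the \textbf{BAG} argument (``induct on the structure of game terms, the new case is $\parallel$''), and your normalisation --- canonical form first, then dual-pushing via $G6$, $G7$, $G10$, $CG9$, $\iota$-elimination via $G12$, $G13$, $CG10$, $CG11$, and deletion of embedded conjuncts and disjuncts justified by soundness of embedding (with the parallel-head clause discharged by $G11$) and a well-founded measure --- is exactly the \textbf{BAG} procedure with the $\parallel$-clauses slotted in. The reabsorption issue you flag (collapsing a parallel head or a head $\iota$ without leaving the canonical shape) is the right point to single out as the only verification with no \textbf{BAG} counterpart, and it is minor.
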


\begin{proof}
Similarly to the proof of Theorem \ref{et2} in \cite{BAG1}, it is sufficient to induct on the structure of game terms according to the definition of minimal canonical game terms, the new case is $\parallel$.
\end{proof}

Since the Elimination theorem holds, the completeness of ACG only need to discuss the relationship among the minimal canonical terms.

\begin{theorem}[Completeness theorem]
The minimal canonical terms $G$ and $H$ are equivalent iff they are isomorphic.
\end{theorem}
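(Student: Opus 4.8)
The plan is to mirror the proof of the completeness theorem for $\textbf{BAG}$ (Theorem~\ref{ct}), threading the operator $\parallel$ through each inductive step. The statement splits into an easy direction (isomorphic terms are equivalent) and the substantive one (equivalent minimal canonical terms are isomorphic), and the latter is obtained by identifying valid term inclusion with the syntactic embedding relation $\rightarrowtail$.

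For the easy direction, I would first note that term equivalence is a congruence for all game operations, which is immediate from the inductive definition of the outcome relations $\rho^i_G$. A single transposition of two conjuncts within a common $\bigwedge$, or of two disjuncts within a common $\bigvee$, is then justified by axioms $G1$--$G3$; since an isomorphism between canonical terms is by definition a finite sequence of such transpositions performed inside subterms, an induction on term structure yields $\models G=H$.

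For the converse the crucial lemma is: \emph{for minimal canonical terms $G$ and $H$ one has $\models G\preceq H$ if and only if $G\rightarrowtail H$}. The implication $G\rightarrowtail H\Rightarrow\models G\preceq H$ is proved by induction on the structure of minimal canonical terms, reading $\vee$ as a choice of disjunct by the first player, $\wedge$ as a choice of conjunct by the second, $\circ$ as sequential play, and the head bundle $||_{l\in L}g_{ikl}$ of a parallel conjunct as one compound move followed by $||_{l\in L}G_{ikl}$; monotonicity (MON) absorbs the passage to larger conjunctions, and axioms $CG1$--$CG11$ are exactly what make these readings consistent in the model. The reverse implication is the heart of the matter: given minimal canonical $G$ with $G\not\rightarrowtail H$, one must exhibit a game board $B$ with $\rho^i_{G}\not\subseteq\rho^i_{H}$ for some $i\in\{1,2\}$. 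I would adapt the separating board of \cite{BAG1,BAG2}, whose states are positions in the syntax tree of the canonical term and whose transitions step along head literals; the single new feature is that a transition now consumes the whole ordered bundle $||_{l\in L}g_{ikl}$ of parallel head literals of a conjunct, so the board must keep distinct bundles apart --- which is precisely the content of the embedding clause for $(g_1\parallel g_2)\circ G$. The minimality conditions (1)--(4) guarantee that such a separating play cannot be absorbed by a spurious conjunct or disjunct.

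Finally I would record the antisymmetry of embedding on minimal canonical terms: if $G\rightarrowtail H$ and $H\rightarrowtail G$ with both minimal canonical, then $G$ and $H$ are isomorphic. The proof is again by induction; conditions (3) and (4) of minimality forbid two distinct conjuncts (respectively disjuncts) of the same term from embedding into one another, so a mutual embedding is forced to be realised by bijections of the conjunct- and disjunct-index sets that respect the tree structure, i.e.\ by an isomorphism. Chaining the pieces: $\models G=H$ iff $\models G\preceq H$ and $\models H\preceq G$ iff $G\rightarrowtail H$ and $H\rightarrowtail G$ iff $G$ and $H$ are isomorphic. The main obstacle is the reverse implication of the crucial lemma --- carrying out the separating-board construction of \cite{BAG1,BAG2} so that parallel head bundles are faithfully witnessed; the remainder is a faithful transcription of the $\textbf{BAG}$ argument with $\parallel$ carried along in the induction.
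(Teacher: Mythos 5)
Your proposal is correct in outline but takes a genuinely different route from the paper. The paper follows Goranko's modal-logic strategy from \cite{BAG1}: translate ACG into the modal logic used there, show the translation preserves validity, and import completeness from the modal side; the only new ingredient is the translation clause for $\parallel$, which is case-split on whether the two arguments are in a race condition (in which case $m(G_1\parallel G_2)$ becomes a composition $m(G_1)(m(G_2))$ or $m(G_2)(m(G_1))$, and otherwise $m(G_1)\vee m(G_2)$). You instead follow the direct algebraic/semantic route closer to \cite{BAG2}: characterize valid inclusion of minimal canonical terms by the syntactic embedding relation $\rightarrowtail$, prove antisymmetry of $\rightarrowtail$ on minimal canonical terms, and separate non-embeddable terms by an explicit game board. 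Each approach has its price: the paper's detour through modal logic lets it reuse an existing completeness theorem wholesale, but hinges on the undefined race-condition predicate and on the translation of $\parallel$ both preserving and reflecting validity; your route avoids that detour but must actually carry out the separating-board construction with parallel head bundles, which you correctly flag as the main obstacle but do not execute --- and which is harder than you suggest, because the paper never defines the outcome relation $\rho^i_{G\parallel H}$, so there is no fixed semantics against which to build the separating board or to verify soundness of $CG1$--$CG11$. At the level of detail both you and the paper supply, the two sketches are comparably incomplete; yours transcribes the \cite{BAG2}-style argument where the paper transcribes the \cite{BAG1}-style one.
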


\begin{proof}
Similarly to the proof of Theorem \ref{ct} in \cite{BAG1}:

\begin{enumerate}
  \item firstly, translate ACG to the same modal logic in \cite{BAG1};
  \item secondly, prove the translation preserve validity;
  \item Finally, get the completeness result.
\end{enumerate}

The only difference is the translation of $\parallel$, because two game terms $G_1$ and $G_2$ may be in race condition, denoted $G_1 \% G_2$, for a game term $G$ and the corresponding modal logic formula $m(G)$, the translation of $\parallel$ is:

\begin{enumerate}
  \item if $G_1\% G_2$, then $m(G_1\parallel G_2)=m(G_1)(m(G_2))$ or $m(G_1\parallel G_2)=m(G_2)(m(G_1))$;
  \item else $m(G_1\parallel G_2)=m(G_1)\vee m(G_2)$.
\end{enumerate}
\end{proof}

\section{Conclusions}\label{con}

We introduce parallelism in the basic algebra of games \cite{BAG1} \cite{BAG2} to model concurrent game algebraically. The resulted algebra ACG can be used reasoning parallel systems with game theory support.

\newpage

%

\label{lastpage}

\end{document}